\newtheorem{thm}{Theorem}
\newtheorem{lem}[thm]{Lemma}
\newtheorem{prop}[thm]{Proposition}
\DeclareMathOperator{\Ran}{Ran}
\DeclareMathOperator{\spec}{spec}
\newcommand{\R}{\mathbb{R}}
\newcommand{\Z}{\mathbb{Z}}
\newcommand{\N}{\mathbb{N}}
\renewcommand{\i}{\mathrm{i}}
\newcommand{\e}{\mathrm{e}}
\begin{document}

\title{
Ground state energy of dense gases of strongly interacting fermions}

\author[1]{S\o ren Fournais}
\author[1]{B\l a\.zej Ruba}
\author[1]{Jan Philip Solovej}

\affil[1]{Department of Mathematical Sciences, University of Copenhagen, \protect\\
Universitetsparken 5, 2100 Copenhagen, Denmark }
%\affil[2]{Institution 2, \protect\\
%Address 2}

\date{\today}
\maketitle
\abstract{We study the ground state energy of a gas of $N$ fermions confined to a~unit box in $d$ dimensions. The particles interact through a 2-body potential with strength scaled in an $N$-dependent way as $N^{-\alpha}v$, where $\alpha\in \R$ and $v$ is a~function of positive type satisfying a mild regularity assumption. Our focus is on the strongly interacting case $\alpha<1-\frac2d$. We contrast our result with existing results in the weakly interacting case $\alpha>1-\frac2d$, and the transition happening 
at the mean-field scaling $\alpha=1-\frac2d$. Our proof is an adaptation of the bosonization technique used to treat the mean-field case. 
}
\section{Introduction}

One important goal in mathematical physics is to rigorously establish accurate approximations to ground energies of gases of quantum particles. For Fermi gases of low density $\rho$ this was initiated in \cite{LSS05}, where the leading contribution from the interactions to the energy was obtained. For gases in three-dimensional space it is of order $\rho^2$ and comes entirely from the potential between pairs of fermions with opposite spin. The same result with better error bounds (but stronger regularity assumptions on the potential) was derived in \cite{FGHP21,G23}. The currently best known upper bound \cite{G23} has errors of order $\rho^{\frac73}$, hence comparable with the conjectural next term \cite{HY57}. In the case of spinless (or spin-polarized) fermions, already the leading contribution of the interactions is expected to be of order $\rho^{\frac83}$. Using the method of cluster expansions this expected form of the interaction energy was established in~\cite{arxivLS23} as an upper bound. The method was also extended in \cite{arxivL23} to cover fermions with spin, leading to an upper bound with errors almost as small as in~\cite{G23}, but holding for a much larger class of potentials. There are also interesting results about gases at positive temperature \cite{S04,arxivLS23_pressure} and about one-dimensional gases~\cite{arxivARS22}.

There seems to be much less work on dense gases, presumably mainly due to the mathematical difficulty of the problem. Three early works on charged systems were \cite{GS94, B92, B93}. In particular, \cite{GS94} dealt with the thermodynamic limit of the Jellium model.  It has become popular to instead study systems of increasingly many particles confined to a fixed volume, and this we also do in the present work. This may be deemed somewhat unphysical, but it is not implausible that some conclusions of such investigations remain valid in more realistic models. Besides, simplified models provide a useful playground to develop new mathematical methods.

We consider the system of $N$ non-relativistic, interacting spinless fermions on the $d$-dimensional torus $\mathbb T^d = (\mathbb R / \mathbb Z)^d$ of dimension $d \geq 2$. We assume that interactions between particles are described by a two-body potential $v$ whose magnitude is modulated by the factor $N^{- \alpha}$, where $\alpha$ is a fixed numerical parameter. That is, we study the Hamiltonian
\begin{equation}
    H_N =  \sum_{i=1}^N (-\Delta_i) + N^{-\alpha} \sum_{1 \leq i < j \leq N} v(x_i-x_j)
    \label{eq:HN_def}
\end{equation}
on the Hilbert space $\mathcal H_N=L^2_a((\mathbb T^{d})^N)$ of square-integrable functions antisymmetric under permutations of the $N$ copies of $\mathbb T^d$. 

The lack of spin degrees of freedom in \eqref{eq:HN_def} does not make as much of a~difference as for dilute gases, and we make this assumption mostly to simplify the presentation. In Section \ref{sec:statement} we comment how the statement of our results should be modified to incorporate spin.

Since the ($N$-independent) length scales determined by the interaction are much larger than $N^{- \frac{1}{d}}$, which is the expected typical distance between the nearest particles, we are studying a very dense system. 

The orders of magnitude of the two terms in \eqref{eq:HN_def} are $N^{1+ \frac{2}{d}}$ and $N^{2- \alpha}$, respectively. This naive comparison suggests that the interaction potential is a small perturbation for $\alpha > 1 - \frac{2}{d}$, and that it plays a more dominant role for $\alpha < 1 - \frac{2}{d}$. The intermediate choice $\alpha = 1 - \frac{2}{d}$ is often called the mean field scaling. 

Here we consider the ground state energy $E_N$ (the infimum of the spectrum of~$H_N$). In order to frame the further discussion, let us mention two elementary bounds on $E_N$.

\begin{prop} \label{prop:trivial_bounds}
Suppose that the Fourier transform $\widehat v : (2 \pi \Z)^d \to \R$ of $v$ satisfies
\begin{equation}
    \widehat v(k) = \widehat v(-k) \geq 0 \text{ for } k \neq 0, \qquad \sum_{k \in (2 \pi \Z)^d} |k| \widehat v(k) < \infty.
\end{equation}
Introduce the quantity
\begin{equation}
    E_N^{(0)} = 
    \min_{\substack{p_1,\dots,p_N \in (2 \pi \mathbb Z)^d \\ \text{distinct}}} \sum_{i=1}^N |p_i|^2 + \frac{N^{2 - \alpha}}{2} \int_{\mathbb T^d} v - \frac{N^{1- \alpha}}{2} v(0).
    \label{eq:EN0}
\end{equation}
The minimum $E_N$ of the spectrum of the operator \eqref{eq:HN_def} satisfies
\begin{equation}
    E_N^{(0)} \leq E_N \leq E_N^{(0)} + c N^{1 - \alpha - \frac{1}{d}} \sum_{k \in (2 \pi \Z)^d} |k| \widehat v(k).
    \label{eq:trivial_bounds}
\end{equation}
for some positive constant $c$.
\end{prop}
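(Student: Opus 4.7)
The plan is to expand the interaction in Fourier: writing $v(x) = \sum_{k \in (2\pi\Z)^d} \widehat v(k) \e^{\i k \cdot x}$ and introducing the density operator $\rho_k = \sum_{i=1}^N \e^{\i k \cdot x_i}$, one has
\[
\sum_{i<j} v(x_i - x_j) = \tfrac{1}{2}\widehat v(0)\, N(N-1) + \tfrac{1}{2} \sum_{k \neq 0} \widehat v(k)\bigl(|\rho_k|^2 - N\bigr).
\]
For the lower bound, since $\widehat v(k) \geq 0$ for $k \neq 0$, the operator $\sum_{k\neq 0}\widehat v(k)|\rho_k|^2$ is non-negative and may be dropped; combined with $v(0) = \sum_k \widehat v(k)$ and $\int_{\mathbb T^d} v = \widehat v(0)$, this gives the operator bound $N^{-\alpha}\sum_{i<j}v(x_i - x_j) \geq \tfrac{N^{2-\alpha}}{2}\widehat v(0) - \tfrac{N^{1-\alpha}}{2} v(0)$. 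Together with the standard Pauli-exclusion bound $\sum_i (-\Delta_i) \geq \min_{\text{distinct}} \sum_i |p_i|^2$ on $\mathcal H_N$ (achieved on plane-wave Slater determinants), this yields $E_N \geq E_N^{(0)}$.

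For the upper bound, I would take as trial state the Slater determinant $\Psi_0$ of plane waves $\e^{\i p_i \cdot x}$ with $F = \{p_1,\dots,p_N\} \subset (2\pi\Z)^d$ a minimiser of $\sum_i |p_i|^2$. Its kinetic energy is precisely the sum appearing in $E_N^{(0)}$, while the Hartree-Fock formula for a plane-wave Slater determinant gives the interaction expectation
\[
\tfrac{N^{-\alpha}}{2}\Bigl(N^2 \widehat v(0) - \sum_{p,q \in F} \widehat v(p-q)\Bigr).
\]
Grouping terms by $k = p-q$ and subtracting the scalar part of $E_N^{(0)}$, the excess collapses to
\[
\tfrac{N^{-\alpha}}{2} \sum_{k \neq 0} \widehat v(k)\, |F \setminus (F+k)|.
\]

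The main obstacle is the geometric estimate $|F\setminus(F+k)| \leq C |k|\, N^{(d-1)/d}$ for all nonzero $k \in (2\pi\Z)^d$. Here one exploits that $F$ is sandwiched between two lattice balls: there exists $R = O(N^{1/d})$ with $\{p : |p| < R\} \cap (2\pi\Z)^d \subseteq F \subseteq \{p : |p| \leq R\} \cap (2\pi\Z)^d$. For $|k| \leq 2R$, the set $\{p \in F : p+k \notin F\}$ is contained in the lens $\{p : |p| \leq R,\, |p+k| \geq R\}$, whose Euclidean volume is $O(R^{d-1}|k|)$, and a routine lattice-point count gives the estimate; for $|k| > 2R$ the trivial bound $|F \setminus (F+k)| \leq N \leq C R^{d-1} |k|$ suffices. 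Multiplying by $\widehat v(k)$ and summing over $k \neq 0$ then produces the claimed remainder $cN^{1-\alpha-1/d}\sum_k |k| \widehat v(k)$.
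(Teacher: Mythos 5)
Your proof is correct and follows essentially the same route as the paper: the lower bound drops the positive-semidefinite Fourier sum $\sum_{k\neq 0}\widehat v(k)|\rho_k|^2$ and invokes Pauli exclusion for the kinetic term, and the upper bound evaluates $H_N$ on a plane-wave Slater determinant and reduces the excess to a crescent/lens lattice-point estimate of order $N^{(d-1)/d}|k|$ (the paper's Lemma~\ref{lem:Ck_size}). The only differences are cosmetic (you work in first quantization with $\rho_k=\sum_i \e^{\i k\cdot x_i}$ rather than the paper's normal-ordered second-quantized operators) plus a minor omission: the "routine lattice-point count" for the thin annulus $\{R-|k|\le|p|\le R\}$ has an error of order $R^{d-1}$, which must be absorbed using $|k|\ge 2\pi$, a point the paper handles explicitly.
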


We remark that if we replace $\mathbb T^d$ by $\mathbb T^d_L = (\mathbb R / L \mathbb Z)^d$, $N^{- \alpha}$ in \eqref{eq:HN_def} by $1$, and take $v$ to be the $L$-periodization of $V \in L^1(\mathbb R^d)$ with Fourier transform satisfying $\widehat V(k) \geq 0$ and $\int_{\mathbb R^d} |k| \widehat V(k) < \infty$, bounds analogous to \eqref{eq:trivial_bounds} give, in the limit $N \to \infty$, $L \to \infty$ with fixed $\rho = \frac{N}{L^d}$,
\begin{align}
    \liminf \frac{E_N}{L^d} & \geq c_1 \rho^{1 + \frac{2}{d}} + \frac{\rho^2}{2} \int_{\mathbb R^d} V(x) dx - \frac{\rho}{2} V(0), \\
    \limsup \frac{E_N}{L^d} & \leq c_1 \rho^{1 + \frac{2}{d}} + \frac{\rho^2}{2} \int_{\mathbb R^d} V(x)dx - \frac{\rho}{2} V(0) + c_2 \rho^{1 - \frac{1}{d}} \int_{\mathbb R^d} |k| \widehat V(k) dk ,
\end{align}
with explicit constants $c_1,c_2$ depending only on $d$. To the best of our knowledge, these are the best known bounds on the thermodynamic energy density for large particle density.

The upper bound in \eqref{eq:trivial_bounds} is obtained by calculating the expectation value $\langle \psi_0 | H_N \psi_0 \rangle$ in a ground state $\psi_0$ of the non-interacting gas. This may be seen as the first order of the perturbative expansion. A rigorous version of the perturbative expansion up to second order has been discussed in \cite{HPR20}, which focuses on $d=3$ and the mean field scaling $\alpha = \frac13$. The result of \cite{HPR20} applies only to small enough $v$. Accuracy of the expansion is better in the regime of weak interactions $\alpha > \frac13$. In~this case the proofs in \cite{HPR20} show that, at least for regular enough $v$, 
\begin{equation}
    E_N = \langle \psi_0 | H_N \psi_0 \rangle + O (N^{1- 2\alpha}).
    \label{eq:weak_interaction_EN}
\end{equation}
In fact, up to $o(N^{1- 2 \alpha})$ errors, $E_N- \langle \psi_0 | H_N \psi_0 \rangle$ is given by an explicit formula. Notice that $ N^{1 - 2 \alpha} \ll N^{1 - \alpha - \frac{1}{d}} = N^{\frac{2}{3}-\alpha}$. In this sense the upper bound in \eqref{eq:trivial_bounds} is sharp up to terms of lower order in $N$.

Further progress in understanding the ground state energy for $d=3$, $\alpha = \frac13$ was made in two series of works, \cite{BNPSS20,BNPSS21,BPSS23} and \cite{arXivCHN21,CHN23}. In \cite{BNPSS20} a new method of approximate bosonization (inspired by earlier works in physics literature, e.g.\ \cite{BP53,P53,GMB57,S57,SBFB57,L79,H94}) was developed to derive an upper bound on $E_N$. The bound was (at that time conjecturally) sharp up to $O(N^{\frac{1}{3}-\epsilon})$ errors, with some explicit $\epsilon >0$. The result was proved for potentials with positive and compactly supported Fourier transform. The first matching lower bound on $E_N$ was obtained in \cite{BNPSS21}, where it was also assumed that the Fourier transform of the potential has small $\ell^1$ norm. A~generalization to potentials as in Proposition \ref{prop:trivial_bounds} was obtained in~\cite{BPSS23}. Around the same time a slightly different approach to the bosonization method was proposed in \cite{arXivCHN21}, leading to similar results for the ground state energy. This new approach was further developed in \cite{CHN23}, where an upper bound for the ground state energy was derived for square-integrable potentials. This was a significant improvement because it allowed to treat the Coulomb potential. Some excited states of the gas were also considered in \cite{arXivCHN21, CHN22}.

The works \cite{BNPSS20,BNPSS21,BPSS23,arXivCHN21,CHN22} show in particular that in the mean field scaling (at least for $d=3$) neither the lower nor the upper bound in \eqref{eq:trivial_bounds} is sharp up to $o(N^{1- \alpha - \frac{1}{d}})$. By contrast, the upper bound is sharp for $\alpha > 1 - \frac{2}{d}$. The goal of this work is to complete this simple picture by showing that for strongly interacting systems, characterized by $\alpha < 1 - \frac{2}{d}$, the lower bound is sharp. We state this precisely in Theorem \ref{result}, which is our main result.

The proof of our result is based on the bosonization method. We would like to mention two differences between our treatment and \cite{BNPSS20,BNPSS21,BPSS23,arXivCHN21,CHN23}.

The authors of \cite{BNPSS20,BNPSS21,BPSS23} construct almost bosonic operators which create superpositions of many particle-hole pairs, with angular restrictions on momenta. The large number of pairs is needed to overcome the Pauli exclusion principle, while the angular restrictions (and linearization of the dispersion relation around the Fermi surface) guarantee that the created excitations have approximately definite kinetic energy. This allows to approximate the kinetic energy operator with a quadratic form in almost bosonic operators. By contrast, individual pair operators are used in \cite{arXivCHN21,CHN23}. This bypasses the need to linearize the dispersion relation. The individual pairs are not bosons, but since they are created with very small probability, they can be regarded as bosonic on the average. It turns out that this is enough to make the proofs work. In our case, $\alpha < 1- \frac{2}{d}$, the kinetic energy of pairs can be absorbed in error terms, at least for the purpose of proving Theorem \ref{result}. 

In the bosonization method, a Hamiltonian quartic in fermions is related to an operator quadratic in the almost bosonic operators. If these operators were exactly bosonic, one would be able to diagonalize the Hamiltonian using a Bogoliubov transformation. Bogoliubov transformations play a prominent role in the works \cite{BNPSS20,BNPSS21,BPSS23,arXivCHN21,CHN23}. They are not exact, but errors are managable -- essentially because the ground state of the effective quadratic Hamiltonian has (in a suitable average sense) $O(1)$ bosonic excitations. This is no longer true for strongly interacting systems. Therefore, instead of applying Bogoliubov transformations, we derive upper bounds by considering general trial states with $O(1)$ almost bosonic excitations. More precisely, we construct a linear map $\Phi$ from the Fock space of fictitious exact bosons to the Hilbert space of fermions and prove that it is almost isometric, up to errors which are $O(N^{-1 + \frac{1}{d}})$, but rapidly grow with the number of bosons. In a similar sense, $\Phi$ approximately intertwines between the effective bosonic Hamiltonian and $H_N$. We choose a state $f$ in the bosonic Fock space and take $\Phi f$ for the trial state. Then we calculate the energy in the limit $N \to \infty$ and only afterwards optimize over $f$. 

\section{Statement of the main result} \label{sec:statement}

We represent the two-body potential $v$ by Fourier series
\begin{equation}
    v(x) = \sum_{k \in (2 \pi \mathbb Z)^d} \widehat v(k) \e^{\i k x}.
    \label{eq:v_Fourier}
\end{equation}
$v$ has to be real and even, so $\widehat v(k) = \widehat v(-k)$ is real.

Potentials $v$ studied in this paper are bounded, so the differential operator $H_N$ in \eqref{eq:HN_def} is a self-adjoint operator on $\mathcal H_N = L^2_a((\mathbb T^d)^N)$. Its domain consists of all $f \in \mathcal H_N$ such that $\Delta f$ (understood in the sense of distributions) is also in $\mathcal H_N$. The spectrum of $H_N$ is discrete, bounded from below, and consists only of eigenvalues of finite multiplicity. The lowest eigenvalue of $H_N$ is called the ground state energy and denoted $E_N$.

In the absence of interactions (i.e.\ for $v=0$), $E_N$ is an eigenvalue of multiplicity~$1$ if and only if $N$ is of the form
\begin{equation}
    N = | \{ p \in (2 \pi \Z)^d \, | \, |p| \leq k_F \} |
    \label{eq:NF_def}
\end{equation}
for some $k_F > 0$, called the Fermi momentum.

\begin{thm} \label{result}
Let $d \geq 2$, $\alpha < 1 - \frac{2}{d}$. Suppose that the Fourier coefficients in~\eqref{eq:v_Fourier} satisfy
\begin{equation}
 \widehat v(0) \in \R, \qquad\widehat v(k) = \widehat v(-k) \geq 0 \text{ for } k \neq 0, \qquad \sum_{k \in (2 \pi \Z)^d} |k| \widehat v(k) < \infty.
    \label{eq:v_ass}
\end{equation}
Then for $N$ of the form \eqref{eq:NF_def} we have
\begin{equation}
    E_N = \sum_{\substack{p \in (2 \pi \Z)^d \\ |p| \leq k_F}} |p|^2 + \frac{N^{2-\alpha}}{2}
    \int_{\mathbb T^d} v - \frac{N^{1-\alpha}}{2} v(0) + o(N^{1-\alpha - \frac{1}{d}}).
\end{equation}
\end{thm}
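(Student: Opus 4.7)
The plan is to combine the lower bound $E_N \geq E_N^{(0)}$ from Proposition~\ref{prop:trivial_bounds} with a matching upper bound $E_N \leq E_N^{(0)} + o(N^{1-\alpha-\frac{1}{d}})$. A direct computation of $\langle \psi_0 | H_N | \psi_0\rangle$ in the free Fermi sea $\psi_0$ (the Slater determinant with occupied momenta $B = \{p \in (2\pi\Z)^d : |p| \leq k_F\}$) gives, with $n_k := |\{h \in B : h+k \notin B\}|$,
\[
\langle \psi_0 | H_N | \psi_0 \rangle = E_N^{(0)} + \frac{N^{-\alpha}}{2}\sum_{k \neq 0} \widehat v(k)\, n_k.
\]
Using $n_k \lesssim \min(|k|, k_F)\, N^{1 - \frac{1}{d}}$ together with the summability of $|k|\widehat v(k)$, the exchange term is $O(N^{1-\alpha-\frac{1}{d}})$, of the same order as the target error. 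Thus $\psi_0$ is not sharp enough as a trial state, and one must construct a state providing a matching negative bosonic correction.

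To this end I would follow the bosonization strategy advertised in the introduction. For $k \neq 0$ introduce the particle--hole pair operators $b_k^* = \sum_{h\in B,\ h+k\notin B} c_{h+k}^* c_h$; using the standard bilinear commutator identity one checks that the $b_k^*$'s commute among themselves and that $[b_k, b_{k'}^*] = \delta_{k,k'}\, n_k + \mathcal{E}_{k,k'}$, with a Pauli-exclusion remainder $\mathcal{E}_{k,k'}$ quadratic in $c, c^*$ and bounded by the number of particle--hole excitations. Construct an auxiliary Fock space $\cF$ of exact bosons indexed by $k \in (2\pi\Z)^d \setminus \{0\}$ together with a linear map $\Phi : \cF \to \mathcal H_N$ specified by $\Phi|0\rangle = \psi_0$ and the rule that a normalized $m$-boson state with modes $(k_1,\ldots,k_m)$ is sent, after symmetrization, to $n_{k_1}^{-1/2} \cdots n_{k_m}^{-1/2}\, b_{k_1}^* \cdots b_{k_m}^*\, \psi_0$. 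By repeated normal ordering one establishes an approximate isometry $\langle \Phi f | \Phi g \rangle = \langle f | g \rangle + \mathrm{error}$ and an approximate intertwining
\[
H_N\, \Phi f = E_N^{(0)}\, \Phi f + \Phi\bigl(H_{\rm eff}\, f\bigr) + \mathrm{error},
\]
where both errors are $O(N^{-1 + \frac{1}{d}})$ multiplied by polynomial factors in the bosonic number operator. Since $\alpha < 1 - \frac{2}{d}$, the kinetic energy of pairs (of order $|k|k_F$ per pair) is subleading and can be absorbed into the error term, leaving the effective Hamiltonian as the positive semidefinite quadratic form
\[
H_{\rm eff} = \frac{N^{-\alpha}}{2} \sum_{k\neq 0} \widehat v(k)\, n_k\, (a_k^* + a_{-k})(a_k + a_{-k}^*),
\]
whose vacuum expectation reproduces precisely the positive exchange term identified in Paragraph~1.

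Combining the two estimates with the variational principle yields, for $f \in \cF$ a finite-boson trial state supported on momenta $|k| \leq \Lambda$,
\[
E_N \leq E_N^{(0)} + \frac{\langle f | H_{\rm eff} | f \rangle}{\|f\|^2} + o(N^{1-\alpha-\frac{1}{d}}).
\]
The infimum of $H_{\rm eff}$ equals zero and is approached on two-mode squeezed trial states in which $(a_k + a_{-k}^*)$ nearly annihilates $f$ in each $(k,-k)$ channel; equivalently, this is a Bogoliubov-type correction which exactly cancels the exchange contribution $\frac{N^{-\alpha}}{2} \sum \widehat v(k) n_k$ of Paragraph~1. Choosing $f$ as such an approximate squeezed state with cutoff $\Lambda$ and squeeze parameter growing appropriately slowly with $N$, one makes $\langle f | H_{\rm eff} | f \rangle / \|f\|^2 = o(N^{1-\alpha-\frac{1}{d}})$ and concludes the upper bound.

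The hard part will be the simultaneous balance of several quantities in the truncation: the Pauli-exclusion errors in the isometry and intertwining estimates grow with both the boson number of $f$ and with the number of active modes $\lesssim \Lambda^d$, while the contribution of momenta $|k| > \Lambda$ to the exchange must be made $o(N^{1-\alpha-\frac{1}{d}})$ using the tail $\sum_{|k|>\Lambda} |k|\widehat v(k) \to 0$. A related subtlety is that the infimum of $H_{\rm eff}$ is not attained on any finite-boson vector, so one must approximate the ideal squeezed configuration by truncated states and quantify the approximation; this is precisely the reason, emphasized in the introduction, why one computes the energy for fixed $f$ in the $N \to \infty$ limit and only afterwards optimizes over $f$, rather than attempting to implement the Bogoliubov transformation as a unitary on $\mathcal H_N$.
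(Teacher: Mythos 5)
Your proposal follows the same bosonization route as the paper: the linear map $\Phi$ from an auxiliary bosonic Fock space built from normalized particle--hole pair operators, the approximate-isometry and approximate-intertwining estimates, the observation that the pair kinetic energy is subleading precisely because $\alpha < 1 - \frac{2}{d}$, and the order-of-limits device in which the vanishing infimum of the $N$-independent quadratic form is exploited by sending $N\to\infty$ for fixed $f$ before optimizing over $f$. The one caveat is an internal inconsistency: your third paragraph suggests letting the squeeze parameter of $f$ grow slowly with $N$, which would require the isometry and intertwining error constants (growing factorially in the boson number and the number of active modes of $f$) to be balanced against $N$ -- a much more delicate bookkeeping -- but your final paragraph correctly supersedes this by keeping $f$ fixed and passing to the $N\to\infty$ limit first, which is exactly the paper's resolution.
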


Let us state a version of Theorem \ref{result} valid for particles with spin. Let $q$ be the number of spin states. For each spin value $\sigma \in \{ 1 , \dots, q \}$ we let the number of particles of spin $\sigma$ be
\begin{equation}
    N_\sigma = | \{ p \in (2 \pi \Z)^d \, | \, |p| \leq \lambda_\sigma k_F \} |,
    \label{eq:NF_def_sigma}
\end{equation}
where $\lambda_\sigma$ is fixed and $k_F \to \infty$ is the same for all $\sigma$. The total number of particles is $N = \sum_\sigma N_{\sigma}$. We consider the Hilbert space of square-integrable functions on $(\mathbb T^d)^N$ antisymmetric under permutations in $S_{N_1} \times \dots \times S_{N_q}$, on which we have the Hamiltonian
\begin{equation}
H_N = \sum_{i=1}^N (- \Delta_i) + N^{-\alpha} \sum_{1 \leq i < j \leq N} v_{\sigma(i) \sigma(j)} (x_i-x_j).
\end{equation}
Here $\sigma(i)$ is the spin of the $i$th particle, i.e.\ $\sigma(i)=1$ if $i \in \{ 1 , \dots, N_1 \}$, $\sigma(i)=2$ if $i \in \{ N_1 +1 , \dots, N_1+N_2 \}$ etc. The potential is taken to be a real matrix in the spin space satisfying $v(x)= v(-x)^{\mathrm{T}}$. In particular its Fourier transform $\widehat v(k)$ takes values in the hermitian matrices. Under the assumptions
\begin{equation}
    \widehat v(k) \text{ is a positive matrix for all } k \neq 0, \qquad \int |k| \| \widehat v(k) \| dk < \infty,
\end{equation}
a self-evident adjustment of the proof of Theorem \ref{result} shows that $E_N = \inf \spec H_N$ satisfies
\begin{equation}
    E_N = \sum_{\sigma=1}^q \sum_{\substack{p \in (2 \pi \mathbb Z)^d \\ |p| \leq \lambda_\sigma k_F}} |p|^2 + \frac{1}{2} \sum_{\sigma, \sigma'=1}^q N_\sigma N_{\sigma'} \int_{\mathbb T^d} v_{\sigma \sigma'} - \frac{1}{2} \sum_{\sigma=1}^q N_\sigma v_{\sigma \sigma}(0) + o (N^{1 - \alpha - \frac{1}{d}}).
\end{equation}

To avoid cluttered notation, we present the proof for the slightly less general statement in Theorem \ref{result}.

\section{Proof}

The $n$ particle Hilbert space is defined to be $\mathcal H_n =  L^2_a((\mathbb T^d)^n)$ -- square integrable functions on $(\mathbb T^d)^n$ anti-symmetric with respect to permutations of the $n$ coordinates. The direct sum $\mathcal H = \bigoplus_{n=0}^\infty \mathcal H_n$ is called the Fock space. 

We let $a_p^*,a_p$ be the creation and annihilation operators of a particle with momentum $p$. These are bounded operators on $\mathcal H$, adjoint to each other, such that
\begin{equation}
    a_p^* f (x_1, \dots, x_{n+1}) = \frac{1}{\sqrt{n+1}} \sum_{i=1}^{n+1} (-1)^{i-1} \e^{\i  p x_i} f(x_1, \dots, x_{i-1}, x_{i+1}, \dots, x_{n+1}) 
\end{equation}
for $f \in \mathcal H_n$. We have the canonical anticommutation relations 
\begin{equation}
    a_p a_q + a_q a_p = a_p^* a_q^* + a_q^* a_p^* = 0 , \qquad a_p a_q^* + a_q^* a_p = \delta_{p,q},
\end{equation}
where $\delta_{p,q}$ is the Kronecker delta. 

The plane wave state $\psi_0 \in \mathcal H_N$ is defined by
\begin{equation}
    \psi_0(x_1, \dots, x_N) = \frac{1}{\sqrt{N!}} \det(\e^{\i p_i x_j})_{i,j=1}^N,
\end{equation}
where $p_1, \dots, p_N$ are all the $p \in (2 \pi \Z)^d$ such that $|p| \leq k_F$. It satisfies $\| \psi_0 \|=1$ and
\begin{equation}
    a_p^* \psi_0 =0 \text{ for } |p| \leq k_F, \qquad a_p \psi_0 =0 \text{ for } |p| > k_F.
\end{equation}

The normally ordered kinetic energy operator is defined by\footnote{The series converges pointwise on the dense domain of all vectors obtained from $\psi_0$ by acting with a polynomial in $a_p^*$. By the sum of the series we mean the operator closure of the expression defined first on the said dense domain. The same remark applies to other similar expressions below.}
\begin{equation}
    :T: = \sum_{p \in (2 \pi \Z)^d} p^2 a_p^* a_p - \sum_{\substack{p \in (2 \pi \Z)^d \\ |p| \leq k_F}} p^2 = - \sum_{\substack{p \in (2 \pi \Z)^d \\ |p| \leq k_F}} p^2 a_p a_p^* + \sum_{\substack{p \in (2 \pi \Z)^d \\ |p| > k_F}} p^2 a_p^* a_p.
\label{eq:normal_ordered_T}
\end{equation}
By construction, $:T: \geq 0$ on $\mathcal H_N$.

For every $k \in (2 \pi \Z)^d$ let
\begin{equation}
    \rho_k = \sum_{p \in (2 \pi \Z)^d} a_{p-k}^* a_p.
\end{equation}
We note that $\rho_k^* = \rho_{-k}$ and that $\rho_k$ restrict to commuting operators on each $\mathcal H_n$ with norm $n$. We have $\rho_0=n$ on $\mathcal H_n$.

Let us define
\begin{equation}
    E_N^{(0)} = \sum_{\substack{p \in (2 \pi \Z)^d \\ |p| \leq k_F}} |p|^2 + \frac{N^{2-\alpha}}{2}
    \int_{\mathbb T^d} v - \frac{N^{1-\alpha}}{2} v(0).
\end{equation}
We have to prove that $E_N = E_N^{(0)} + o (N^{1 - \alpha - \frac{1}{d}})$.

With standard manipulations one can derive 
\begin{equation}
    H_N - E_N^{(0)} = :T: + \frac{N^{- \alpha}}{2} \sum_{\substack{k \in (2 \pi \Z)^d \\ k \neq 0}} \widehat v(k) \rho_k^* \rho_k,
\end{equation}
in which restriction of the right hand side to $\mathcal H_N$ is implicitly understood. From this we see immediately that, under assumptions of Theorem \ref{result},
\begin{equation}
    E_N - E_N^{(0)} \geq 0.
\end{equation}
Therefore, we need only an upper bound on $E_N - E_N^{(0)}$. One upper bound is obtained by considering the trial vector $\psi_0$.
\begin{equation}
    E_N \leq \langle \psi_0 | H_N \psi_0 \rangle = E_N^{(0)} + \frac{N^{- \alpha}}{2} \sum_{\substack{k \in (2 \pi \Z)^d \\ k \neq 0}} |C_k| \widehat v(k),
\label{eq:trial_state_pw}
\end{equation}
where $|C_k|$ is the number of elements of the crescent set $C_k$:
\begin{equation}
    C_k =  \{ p \in (2 \pi \Z)^d \, | \, |p| \leq k_F, \, |p+k| > k_F \}.
\end{equation}
By Lemma \ref{lem:Ck_size} below, \eqref{eq:trial_state_pw} gives $E= E_N^{(0)} + O(N^{1 - \alpha - \frac{1}{d}})$. A better trial state is needed to replace $O(N^{1 - \alpha - \frac{1}{d}})$ with $o(N^{1 - \alpha - \frac{1}{d}})$.

\begin{lem} \label{lem:Ck_size}
There exist positive constants $c_1,c_2, c_3$, depending only on $d$, such that for all $k \in (2 \pi \Z)^d$ and for all $k_F \geq c_3$ we have
    \begin{equation}
        c_1 k_F^{d-1} \min \{ |k|, k_F \} \leq |C_k| \leq c_2 k_F^{d-1} \min \{ |k|, k_F \}.
\label{eq:crescent_bounds}
    \end{equation}
\end{lem}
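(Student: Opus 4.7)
The plan is to reduce both inequalities in \eqref{eq:crescent_bounds} to the Euclidean volume of the continuous crescent
\[
\tilde C_k := \{p \in \R^d : |p| \leq k_F,\ |p + k| > k_F\} = B(0, k_F) \setminus B(-k, k_F),
\]
which is of order $k_F^{d-1} \min\{|k|, k_F\}$. The upper bound I would deduce from the containment $C_k = \tilde C_k \cap (2\pi\Z)^d$ together with a standard Minkowski inflation estimate, while the lower bound I would get by exhibiting a surjection from the whole Fermi ball onto $C_k$ with fibers of controlled size.

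For the upper bound, the triangle inequality confines $\tilde C_k$ to the spherical shell $\{p \in \R^d : k_F - |k| < |p| \leq k_F\}$ when $|k| \leq k_F$, and trivially to $B(0, k_F)$ when $|k| > k_F$. The inflation inequality $(2\pi)^d |R \cap (2\pi\Z)^d| \leq \mathrm{vol}(R + [-\pi,\pi]^d)$ applied to this shell (resp.\ ball) produces a fattened region of volume of order $k_F^{d-1}(|k| + 1)$ (resp.\ $k_F^d$). Because $|k| \geq 2\pi$ for every $k \neq 0$, both estimates combine to the stated upper bound as soon as $k_F$ exceeds an appropriate dimensional constant.

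For the lower bound, I would introduce the map
\[
\Psi \colon B(0, k_F) \cap (2\pi\Z)^d \longrightarrow C_k, \qquad \Psi(p) := p + n(p)\, k,
\]
where $n(p) := \max\{n \in \Z_{\geq 0} : |p + nk| \leq k_F\}$. By maximality both $|\Psi(p)| \leq k_F$ and $|\Psi(p) + k| > k_F$ hold, so $\Psi(p) \in C_k$. The fiber over $q \in C_k$ is contained in $\{q - nk : n \in \Z_{\geq 0}\} \cap B(0, k_F)$, i.e.\ in the lattice points on a single chord of $B(0, k_F)$ parallel to $k$; such a chord has Euclidean length at most $2 k_F$, so the fiber contains at most $2k_F/|k| + 1$ elements. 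Summing over fibers yields
\[
|B(0, k_F) \cap (2\pi\Z)^d| \;\leq\; |C_k|\,\Bigl(\frac{2k_F}{|k|} + 1\Bigr),
\]
and combining with the standard estimate $|B(0, k_F) \cap (2\pi\Z)^d| \geq c\, k_F^d$ (valid for $k_F$ larger than some dimensional constant $c_3$) rearranges to $|C_k| \geq c_1 k_F^{d-1}\min\{|k|, k_F\}$.

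The chief subtlety here is the regime where $|k|$ is of order the lattice spacing $2\pi$: there the crescent is so thin that a naive attempt to count lattice points in a carefully chosen sub-region of $\tilde C_k$ becomes delicate. The string-counting surjection $\Psi$ sidesteps this difficulty entirely because it operates on the fat Fermi ball rather than on the thin crescent, so the lower bound it produces is uniform in $|k|$.
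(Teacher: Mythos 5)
Your proof is correct, and the lower bound takes a genuinely different route from the paper. The upper bound (containing $C_k$ in a spherical shell and counting lattice points, whether by Minkowski inflation as you do or by the asymptotic $|B(K)| = (2\pi)^{-d}|\mathbb S^{d-1}|K^d + o(K^{d-1})$ as in the paper) is essentially the same. For the lower bound, the paper uses the group $G = \mathrm{GL}(d,\Z)\cap\mathrm O(d)$ of lattice isometries: writing $\mathbb S^{d-1}$ as the $G$-orbit of a fundamental domain $D$ on which any two unit vectors have inner product $\geq 1/\sqrt d$, it shows the fattened shell $B(k_F)\setminus B\bigl(k_F - |k|/\sqrt d\bigr)$ is covered by the $2^{d(d+1)/2}$ sets $C_{gk}$, all of the same cardinality. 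Your argument instead counts the fibers of the map $\Psi(p) = p + n(p)k$ from the full Fermi ball onto $C_k$; since each fiber lies on one chord of $B(k_F)$ and hence has at most $2k_F/|k|+1$ points, you get $N \leq |C_k|\bigl(2k_F/|k|+1\bigr)$ directly. Your approach buys two things: it is more elementary (no use of the lattice symmetry group, no geometry of the fundamental domain), and it sidesteps the delicate case $|k| = O(1)$ where the shell $B(k_F)\setminus B(k_F - |k|/\sqrt d)$ has $O(1)$ width, so that the paper must take $k_F$ large to beat the $o(k_F^{d-1})$ lattice-count error against a main term that is itself only $O(k_F^{d-1})$. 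Your string-counting surjection works on the fat ball and gives a uniform bound with no such balancing. The paper's argument, by contrast, produces explicit constants tied to the geometry of $\mathbb S^{d-1}$, but both approaches give the claimed $c_1, c_2, c_3$ depending only on $d$, so the extra explicitness is not needed here.

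One small remark on completeness: you should note (as you implicitly do by restricting to $k\neq 0$ with $|k|\geq 2\pi$) that $k=0$ is trivial since $C_0 = \emptyset$ and $\min\{|0|, k_F\}=0$, so the bound reads $0\leq 0$. Also, while you describe $\Psi$ as a surjection, surjectivity is not actually needed for the fiber-counting step; what matters is that the domain is partitioned into fibers of controlled size, indexed by a subset of $C_k$. It does hold, though: for $q\in C_k$ convexity of the ball forces the set $\{t\in\R : |q+tk|\leq k_F\}$ to be an interval containing $0$ but not $1$, so $n(q)=0$ and $\Psi(q)=q$.
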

\begin{proof}
If $d=1$ (which is not needed in this paper), one can check \eqref{eq:crescent_bounds} by a~direct computation. The estimate is also clear for $k =0$, because then $C_k = \emptyset$. We assume that $d \geq 2$ and $ k \neq 0$ (so $|k| \geq 2 \pi$). 

Let $B(K) = \{p \in (2 \pi \Z)^d \, | \, |p| \leq K  \}$. One has
 \begin{equation}
     |B(K)|= (2 \pi)^{-d} |\mathbb S^{d-1}| |K|^d + o(K^{d-1}),
 \end{equation}
where $\mathbb S^{d-1}$ is the $(d-1)$-dimensional unit sphere and $|\mathbb S^{d-1}|$ is its volume.

First we prove the upper bound on $|C_k|$. Consider the case $|k| \leq k_F$. Then
    \begin{equation}
        C_k \subset B(k_F) \setminus B(k_F-|k|), \qquad B(k_F-|k|) \subset B(k_F),
    \end{equation}
    and therefore we have
    \begin{align}
|C_k| &\leq (2 \pi)^{-d}|\mathbb S^{d-1}| (k_F^d - (k_F-|k|)^d) + o(k_F^{d-1}) \\
&\leq (2 \pi)^{-d}|\mathbb S^{d-1}| d \, k_F^{d-1} |k| + o (k_F^{d-1}). \nonumber
    \end{align}
For large enough $k_F$ (independent of $k$) we can drop the $o (k_F^{d-1})$ error at the price of slightly increasing the constant in the first term.

In the case $|k| > k_F$ we use
\begin{equation}
    |C_k| \leq N \leq c k_F^d.
\end{equation}

Next we prove the lower bound for $|k| \leq k_F$. Let $G$ be the group of linear isometries of $(2 \pi \Z)^d$, $G = \mathrm{GL}(d, \Z) \cap \mathrm{O}(d)$. Considering how $G$ acts on the set $\{ (\pm 1, 0, \dots), (0, \pm 1, 0, \dots), \dots \}$ we see that $|G|= 2^{\frac{d(d+1)}{2}}$. Every $G$-orbit in $\mathbb S^{d-1}$ meets the set
\begin{equation}
D=\{ (\omega_1,\dots,\omega_d) \in \mathbb S^{d-1} \, | \, 0 \leq \omega_1 \leq \omega_2 \leq \dots \leq \omega_d \},
\end{equation}
and for any two $\omega,\omega' \in D$ we have $\omega \cdot \omega' \geq \frac{1}{\sqrt{d}}$. 

Let $p \in B(k_F) \setminus B\Big(k_F - \frac{1}{\sqrt{d}}|k| \Big)$. Choose $g \in G$ such that $p \cdot gk \geq \frac{1}{\sqrt{d}} |p| |k|$. Then
\begin{align}
   |p + gk|^2 & \geq |p|^2 + \frac{2}{\sqrt{d}} |p| |k| + |k|^2 > \Big(k_F- \frac{1}{\sqrt{d}} |k| \Big)^2 + \frac{2|k|}{\sqrt{d}} \Big(k_F - \frac{1}{\sqrt{d}} |k| \Big) + |k|^2 \nonumber \\
   & = k_F^2 + \frac{d-1}{d} |k|^2 > k_F^2,
\end{align}
so $p \in C_{g k}$. We proved that
\begin{equation}
    B(k_F) \setminus B \Big( k_F - \frac{1}{\sqrt{d}} |k| \Big) \subset \bigcup_{g \in G} C_{g k}.
\end{equation}
Since $|C_{gk}|= |C_k|$, we have
\begin{equation}
    |C_k| \geq 2^{- \frac{d(d+1)}{2}} \left| \bigcup_{g \in G} C_{g k} \right| \geq 2^{- \frac{d(d+1)}{2}} \left| B(k_F) \setminus B \Big( k_F - \frac{1}{\sqrt{d}} |k| \Big) \right|.
\end{equation}
Next, we have
\begin{align}
    \left| B(k_F) \setminus B \Big( k_F - \frac{1}{\sqrt{d}} |k| \Big) \right| & = (2 \pi)^{-d} |\mathbb S^{d-1}| \Big( k_F^d - \Big( k_F - \frac{1}{\sqrt{d}} |k| \Big)^d \Big) + o(k_F^{d-1}) \\
     & \geq (2 \pi)^{-d} |\mathbb S^{d-1}| d \, \Big( k_F- \frac{|k|}{\sqrt{d}} \Big)^{d-1} \frac{|k|}{\sqrt{d}} +o(k_F^{d-1})  . \nonumber
\end{align}
Again restricting to large enough $k_F$, this may be lower bounded by $c k_F^{d-1} |k|$. 

If $|k| > k_F$, we have $C_k \cup C_{-k} = B(k_F)$, and hence
\begin{equation}
    |C_k| \geq \frac12 N \geq c k_F^{d}.
\end{equation}
\end{proof}

In what follows, we take $k_F$ large enough that \eqref{eq:crescent_bounds} holds.

It is useful to split $\rho_k$ with $k \neq 0$ as follows:
\begin{equation}
    \rho_k = b_k + b_{-k}^* + d_k,
\end{equation}
where we introduce (all $p$ here and in other sums below are in $(2 \pi \Z)^d$):
\begin{align}
    b_{-k}^* &= \sum_{\substack{|p| \leq k_F \\ |p-k| > k_F}} a_{p-k}^* a_p, \label{eq:bstar} \\
    b_k &= \sum_{\substack{|p| > k_F \\ |p-k| \leq k_F}} a_{p-k}^* a_{p}, \\
    d_k & = \sum_{|p|, |p-k| \leq k_F } a_{p-k}^* a_p + \sum_{|p|, |p-k| > k_F } a_{p-k}^* a_p .
\end{align}
Operator $b_{-k}^*$ is indexed by $-k$ so that $b_k$ and $b_k^*$ are adjoint to each other.

We have the commutation rules
\begin{gather}
    [b_k,b_q]=[b_k^*,b_q^*]= 0 , \nonumber \\
    [b_k , b_q^*] = |C_k| \delta_{k,q} + :[b_k,b_q^*]:, \label{eq:commutator_decomp}
\end{gather}
where we introduced
\begin{equation}
    :[b_k,b_q^*]: = - \sum_{\substack{|p|, |p+q-k| \leq k_F \\ |p+q| > k_F}} a_p a_{p+q-k}^* - \sum_{\substack{|p| \leq k_F \\ |p+q|, |p+k| > k_F}} a_{p+q}^* a_{p+k}.
\end{equation}
We will not use it, but we note that $- :[b_k,b_k^*]: \geq 0$.

The basis of the bosonic approximation is that $d_k$ and $:[b_k,b_q^*]:$ annihilate $\psi_0$, and they are small when acting on states close to $\psi_0$. The closeness is measured by the number of excitations operator
\begin{equation}
    \mathcal N = \frac12 \sum_{|p| \leq k_F} a_p a_p^* + \frac12 \sum_{|p| > k_F} a_p^* a_p ,
    \label{eq:exc_number}
\end{equation}
see Lemma \ref{lem:basic_est} below. We remark that the two terms in \eqref{eq:exc_number} are equal on $\mathcal H_N$.

\begin{lem} \label{lem:basic_est}
Let $\psi \in \mathcal H$.
\begin{enumerate}
    \item If $\| \mathcal N^{\frac12} \psi \| < \infty$, then $\| b_k \psi \| \leq |C_k|^{\frac{1}{2}}  \| \mathcal N^{\frac12} \psi \|$, $\| b_k^* \psi \| \leq |C_k|^{\frac{1}{2}}  \| (\mathcal N + 1)^{\frac12} \psi \|$. 
\item If $\| \mathcal N \psi \| < \infty$, then $\| :[b_k,b_q^*]: \psi \| \leq 2 \| \mathcal N \psi \|$ and $\| d_k \psi \| \leq 2 \| \mathcal N \psi \|$. 
\end{enumerate}
\end{lem}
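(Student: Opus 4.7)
The plan is to treat the four inequalities in sequence. For $\|b_k \psi\|$, I would reindex the defining sum via $q = p-k$ to write $b_k = \sum_{q \in C_k} a_q^* a_{q+k}$, a sum of exactly $|C_k|$ terms. Applying Cauchy--Schwarz and the uniform fermionic bound $\|a_q^*\| \leq 1$ gives
\[
\|b_k \psi\|^2 \leq |C_k| \sum_{q \in C_k} \|a_{q+k}\psi\|^2 \leq |C_k| \sum_{|p|>k_F} \|a_p \psi\|^2 = |C_k|\left\langle \psi, \sum_{|p|>k_F} a_p^* a_p\,\psi \right\rangle,
\]
and on $\mathcal H_N$ the last inner product coincides with $\|\mathcal N^{1/2} \psi\|^2$. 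For $\|b_k^* \psi\|$, the commutator identity \eqref{eq:commutator_decomp} with $q=k$, combined with the stated positivity $-:[b_k,b_k^*]:\, \geq 0$, yields the operator inequality $b_k b_k^* \leq b_k^* b_k + |C_k|$; taking expectation in $\psi$ and inserting the bound on $\|b_k\psi\|^2$ gives $\|b_k^* \psi\|^2 \leq |C_k|\|\mathcal N^{1/2}\psi\|^2 + |C_k|\|\psi\|^2 = |C_k|\|(\mathcal N+1)^{1/2}\psi\|^2$.

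For $\|d_k \psi\|$ and $\|{:}[b_k,b_q^*]{:}\,\psi\|$, each operator splits naturally into two summands: one supported on pairs of momenta both outside the Fermi sea, the other on pairs both inside. The outside summand is the second quantization $d\Gamma(M_+)$ of a bounded single-particle operator $M_+$ on the above-Fermi-sea subspace; the inside summand, after applying $a_{p-k}^* a_p = -a_p a_{p-k}^*$ (valid for $k \neq 0$) and introducing hole operators $h_p := a_p^*$ for $|p| \leq k_F$, becomes (up to sign) $d\Gamma(M_-)$ with $M_-$ a bounded operator on the hole single-particle space. In each case $M_\pm$ is a partial shift (possibly with additional indicator restrictions coming from the ``cross'' constraints on the summation), so $\|M_\pm\| \leq 1$. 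The standard number-operator estimate $\|d\Gamma(M)\phi\| \leq \|M\|\,\|d\Gamma(P)\phi\|$ for $M = PMP$, which follows from writing $d\Gamma(M) = \sum_i M^{(i)}$ and applying the triangle inequality on eigenstates of $d\Gamma(P)$ (at most $d\Gamma(P)$ summands contribute nontrivially on such states), then bounds each summand by $\|\mathcal N_+ \psi\|$ or by $\|\sum_{|p|\leq k_F} a_p a_p^*\,\psi\|$. On $\mathcal H_N$ both agree with $\|\mathcal N \psi\|$, so the triangle inequality over the two summands produces the claimed factor $2$.

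The most delicate step is the hole rewriting together with the observation that $\sum_{|p|\leq k_F} a_p a_p^* = \mathcal N$ on $\mathcal H_N$; naively using $\sum_{|p|\leq k_F} a_p^* a_p$ instead would give a bound of order $N$ and spoil the estimate. A secondary but easy point is that the extra ``cross'' constraints such as $|p+q|>k_F$ appearing in the summands of $:[b_k,b_q^*]:$ only narrow the support of the partial shifts $M_\pm$, so the operator-norm bound $\|M_\pm\| \leq 1$ is preserved and the $d\Gamma$ estimate applies verbatim.
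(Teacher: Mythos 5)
Your proof is correct and is a reasonable unpacking of what the paper compresses into the single line ``Standard applications of Cauchy--Schwarz inequality''; I have two small remarks. First, for $\|b_k^*\psi\|$ you invoke the positivity $-{:}[b_k,b_k^*]{:}\ \geq 0$, which the paper explicitly flags as a fact it will \emph{not} use. This is harmless (and indeed cleanest), but the intended route is probably the one parallel to your $b_k$ estimate: anticommute to put $b_k^*=-\sum_{p\in C_k} a_p a_{p+k}^*$, apply Cauchy--Schwarz to $\langle\chi,b_k^*\psi\rangle=-\sum_p\langle a_p^*\chi, a_{p+k}^*\psi\rangle$, and use $\sum_{p\in C_k}\|a_{p+k}^*\psi\|^2\le\langle\psi,(B+|C_k|)\psi\rangle$ with $B=\sum_{|p|>k_F}a_p^*a_p$; either way the stated bound follows. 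Second, in both halves of your argument you silently identify $\sum_{|p|>k_F}a_p^*a_p$ (and likewise the hole-number operator $\sum_{|p|\le k_F}a_pa_p^*$) with $\mathcal N$; that identity holds only on $\mathcal H_N$, not on the full Fock space $\mathcal H$ that the lemma nominally refers to. On $\mathcal H$ one has only $B\le 2\mathcal N$ (and $A\le 2\mathcal N$), so the constants would degrade by fixed factors. Since the lemma is applied exclusively to vectors in $\mathcal H_N$ (images of $\Phi$), this makes no difference for the paper, but it is worth stating explicitly that you are working on $\mathcal H_N$ when you make that identification. The $d\Gamma$-type reduction you use for $d_k$ and ${:}[b_k,b_q^*]{:}$ — splitting into a particle piece and, after anticommutation and passing to hole operators, a hole piece, each a partial shift of norm $\le 1$ — is standard and correct, and your observation that the extra constraints in ${:}[b_k,b_q^*]{:}$ only shrink the support is exactly what is needed to keep the operator norms bounded by $1$.
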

\begin{proof}
Standard applications of Cacuhy-Schwarz inequality.
\end{proof}

It is convenient to introduce normalized approximately bosonic operators
\begin{equation}
    \varphi_k = |C_k|^{-\frac12} b_k, \quad     \varphi_k^* = |C_k|^{-\frac12} b_k^*, \quad :[\varphi_k,\varphi_q^*]: = |C_k|^{-\frac12} |C_q|^{-\frac12} :[b_k,b_q^*]:. 
    \label{eq:normalized_bosons}
\end{equation}
By Lemma \ref{lem:Ck_size}, this is a meaningful definition as $|C_k| \neq 0$ for all $k$ if $k_F$ is large enough. 

Informally, what follows is based on the approximation
\begin{equation}
    \rho_k \approx |C_k|^{\frac12} (\varphi_k + \varphi_{-k}^*), \qquad [\varphi_k,\varphi_q^*] \approx \delta_{k,q}.
\end{equation}

Let $\mathcal F$ be the bosonic Fock space over $\ell^2((2 \pi \Z)^d \setminus \{ 0 \})$. We denote the vacuum vector of $\mathcal F$ by $\Omega$ and the creation and annihilation operators corresponding to elements of $(2 \pi \Z)^d \setminus \{ 0 \}$ by $e_k^*, e_k$. Most of the time we will work in the dense subspace $\mathcal D \subset \mathcal F$ of vectors obtained from $\Omega$ by acting with a polynomial in $e_k^*$. 

If $S \subset (2 \pi \Z)^d \setminus \{ 0 \}$ is finite, we let $\mathcal D^S \subset \mathcal D $ be the subspace obtained by acting on $\Omega$ with polynomials in $e_k^*$ with the restriction $k \in S$. By definition of $\mathcal D$, every element of $\mathcal D$ is in some $\mathcal D^S$:
\begin{equation}
    \mathcal D = \bigcup_{S} \mathcal D^S.
\end{equation}
 
%Let $\mathcal F$ be the prehilbert space of polynomials in variables $(e_k^*)_{k \neq 0}$, equipped with the scalar product 
%\begin{equation}
%\Big \langle \prod_k e_k^{* \mu_k} | \prod_k e_k^{* \nu_k} \Big \rangle = \prod_k \delta_{\mu_k, \nu_k} \, \mu_k!.
%\end{equation}
%We use $*$ in the names of variables $e_k^*$ because they play the role of bosonic creators. The corresponding annihilators will be denoted by $e_k$. 

For a natural number $m$ we let $\mathcal D_{m} \subset \mathcal D$ be the $m$ particle subspace, i.e. the linear span of vectors $e_{k_1}^* \cdots e_{k_m}^* \Omega$. We introduce also
\begin{equation}
\mathcal D_{\leq m} = \bigoplus_{i=0}^m \mathcal D_i, \qquad \mathcal D^S_{ m} = \mathcal D^S \cap \mathcal D_{ m}, \qquad \mathcal D^S_{\leq m} = \mathcal D^S \cap \mathcal D_{\leq m}.
\end{equation}
We note that
\begin{equation}
 \dim(\mathcal D_{\leq m}^S) = \binom{m + |S|}{m}, \quad    \dim(\mathcal D_{ m}^S) = \binom{m + |S|-1}{m}, \quad \mathcal D= \bigcup_{m , S} \mathcal D^S_{\leq m}.
\end{equation}

We define a linear map $\Phi : \mathcal D \to L^2_{\mathrm{a}}(\mathbb T^{dN})$ by
\begin{equation}
    \Phi(e_{k_1}^* \cdots e_{k_m}^* \Omega) = \varphi_{k_1}^* \cdots \varphi_{k_m}^* \psi_0.
\end{equation}
Equivalently, $\Phi$ is uniquely determined by the conditions 
\begin{equation}
 \Phi(\Omega)=\psi_0, \qquad  \Phi e_k^* = \varphi_k^* \Phi.   
\end{equation}
We note that $\Phi$ depends on $k_F$.

We let $\Phi^S_m$ and $\Phi^S_{\leq m}$ be the restrictions of $\Phi$ to $\mathcal D^S_m$ and $\mathcal D^S_{\leq m}$, respectively. These are linear operators on finite-dimensional Hilbert spaces, so they are bounded. 

Let $\mathcal D^*$ be the space of antilinear functionals on $\mathcal D$ (not necessarily continuous). We have a canonical embedding $\mathcal D \to \mathcal D^*$ given by $f \mapsto \langle \cdot | f \rangle$. If $f \in \mathcal D^*$ and $g \in \mathcal D$, we denote the evaluation of $f$ at $g$ by $\langle g|f \rangle$ and put $\langle f|g \rangle := \overline{\langle g|f \rangle}$. 

We define the adjoint of $\Phi$ to be the map
\begin{equation}
    \Phi^* : L^2_{\mathrm{a}} (\mathbb T^{dN}) \to \mathcal D^*, \qquad \langle g | \Phi^* f \rangle = \langle \Phi g | f \rangle \text{ for } g \in \mathcal D, \ f \in L^2_{\mathrm{a}}(\mathbb T^{dN}).
\end{equation}
Our $\Phi^*$ is an extension of the standard operator adjoint. It is defined on all of $L^2_{\mathrm{a}} (\mathbb T^{dN})$, at the price of being valued in the (very large) space $\mathcal D^*$. We expect that working with $\mathcal D^*$ is not strictly necessary, but it is convenient because it allows to not worry about domain issues in the (essentially algebraic) calculations below.

Note that if $g \in \mathcal D^S_{\leq m}$, then
\begin{equation}
    \langle g | \Phi^* f \rangle = \langle g | (\Phi^S_{\leq m})^* f \rangle,
\end{equation}
and $(\Phi^S_{\leq m})^*$ is a bounded operator $L^2_{\mathrm{a}}(\mathbb T^{dN}) \to \mathcal D^{S}_{\leq m}$.

\begin{lem} \label{lem:isometric_Phi}
We have a bound
\begin{equation}
    |\langle e^*_{k_m} \cdots e_{k_1}^*\Omega |  (\Phi^* \Phi -1)  e^*_{q_m} \cdots e^*_{q_1} \Omega \rangle| \leq c \binom{m}{2} m! k_F^{1-d}  ,\label{eq:Phi_iso}
\end{equation}
where $c$ is a universal constant. In particular we have 
\begin{equation}
    \| (\Phi^{S}_{\leq m})^* \Phi_{\leq m}^S-1 \| \leq c \binom{m+|S|-1}{m} \binom{m}{2} m! k_F^{1-d}.
\end{equation}
Hence (assuming $d \geq 2$) for any $f \in \mathcal D$ we have
\begin{equation}
    \lim_{k_F \to \infty} \| \Phi f \| = \| f \|.
\end{equation}
\end{lem}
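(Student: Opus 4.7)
The plan is to establish the three claims in sequence, reducing (2) to (1) via a basis expansion and (3) to (2) because every $f \in \mathcal D$ lies in some $\mathcal D^S_{\leq m}$. The heart of the argument is an induction on $m$ for the matrix-element bound (1), built on the commutation rule $[\varphi_k,\varphi_q^*] = \delta_{k,q} + :[\varphi_k,\varphi_q^*]:$ and the identity $\varphi_k \psi_0 = 0$ (valid because $b_k$ contains only $a_p$ with $|p|>k_F$, which annihilate $\psi_0$).

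Before the induction I would collect three analytic inputs. Combining Lemma \ref{lem:basic_est} with the lower bound $|C_k|\geq c k_F^{d-1}$, immediate from Lemma \ref{lem:Ck_size} using $|k|\geq 2\pi$ for $k\neq 0$, gives
$\|:[\varphi_k,\varphi_q^*]:\psi\|\leq c k_F^{-(d-1)}\|\mathcal N\psi\|$.
Direct inspection of the explicit formula for $:[b_k,b_q^*]:$ shows that each of its summands rearranges either a hole below $k_F$ or a particle above, so $:[\varphi_k,\varphi_q^*]:$ commutes with $\mathcal N$. Finally, iterating the bound $\|\varphi_k^* \psi\|\leq \|(\mathcal N+1)^{1/2}\psi\|$ from Lemma \ref{lem:basic_est} gives $\|\varphi_{q_1}^*\cdots \varphi_{q_j}^*\psi_0\|\leq \sqrt{j!}$ and, more generally, $\|\varphi_{q_1}^* \cdots \varphi_{q_j}^* \chi\| \leq \sqrt{(n+j)!/n!}\,\|\chi\|$ whenever $\chi$ has $\mathcal N$-eigenvalue $n$.

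For the inductive step on the difference $D_m$ in (1), I would move $\varphi_{k_1}$ past $\varphi_{q_1}^*,\dots,\varphi_{q_m}^*$, using $\varphi_{k_1}\psi_0=0$ at the end. The $\delta$-parts of the commutators, compared with the analogous expansion of $\langle \Omega | e_{k_m}\cdots e_{q_m}^*\Omega\rangle$, contribute at most $m D_{m-1}$. Each of the $m$ remaining error terms has the form $\langle \psi_0|\varphi_{k_m}\cdots \varphi_{k_2} \varphi_{q_1}^*\cdots :[\varphi_{k_1},\varphi_{q_i}^*]:\cdots \varphi_{q_m}^*\psi_0\rangle$: applying Cauchy--Schwarz and invoking the three inputs above, with $\mathcal N$-preservation crucial for pulling the $:[...]:$ estimate through the remaining $\varphi_{q_j}^*$'s on the right, produces a bound $c k_F^{1-d}(m-i)(m-1)!$; summing over $i$ yields $c k_F^{1-d}\binom{m}{2}(m-1)!$. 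The resulting recursion $D_m\leq m D_{m-1}+c k_F^{1-d}\binom{m}{2}(m-1)!$ with $D_0=D_1=0$ closes to the target $D_m\leq c\binom{m}{2} m! k_F^{1-d}$.

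For (2), I would expand $g\in\mathcal D^S_{\leq m}$ in the basis $e_{k_1}^*\cdots e_{k_j}^*\Omega$ with $j\leq m$ and $k_i\in S$, apply (1) termwise, and absorb the dimension factor $\binom{m+|S|-1}{m}$. Part (3) is then immediate: any $f\in\mathcal D$ lies in some fixed $\mathcal D^S_{\leq m}$, and the bound from (2) tends to $0$ as $k_F\to\infty$ because $k_F^{1-d}\to 0$ for $d\geq 2$. The main obstacle I foresee is the factorial bookkeeping in the induction; $\mathcal N$-preservation of $:[\varphi_k,\varphi_q^*]:$ is essential, for without it each interior commutator would cost an additional factor of $\mathcal N$, producing factorial growth incompatible with the target $\binom{m}{2}m!$.
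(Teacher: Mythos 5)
Your proposal follows the same inductive strategy as the paper, and the argument is essentially correct. The paper also commutes the annihilation operator coming from the bra through the product of creation operators, uses the decomposition $[\varphi_k,\varphi_q^*]=\delta_{k,q}+:[\varphi_k,\varphi_q^*]:$ together with $\varphi_k\psi_0=0$, matches the $\delta$-parts against the exact bosonic inner product to obtain a recursion of the form $\epsilon_{m+1}\leq (m+1)\max|\epsilon_m|+(\text{error})$, and bounds the error terms via Lemma~\ref{lem:basic_est} and Lemma~\ref{lem:Ck_size}, after which the operator norm estimate follows from $\|T\|\leq n\max_{ij}|T_{ij}|$ (with $n=\binom{m+|S|-1}{m}$, exploiting that $\Phi^*\Phi-1$ preserves the particle-number grading). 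Two small remarks: with your bra $\langle e_{k_m}^*\cdots e_{k_1}^*\Omega|$, the operator that comes out for free after taking adjoints is $\varphi_{k_m}$, not $\varphi_{k_1}$, so you should commute that one through (or else you incur extra commutator errors on the $k$-side as well) --- by bosonic symmetry of the monomials this is only a relabeling, but the computation should be set up consistently. Your observation that $:[\varphi_k,\varphi_q^*]:$ preserves $\mathcal N$ is indeed the detail needed to keep the factorial bookkeeping under control (the paper uses it implicitly when invoking Lemma~\ref{lem:basic_est}), and it is correct: each summand of $:[b_k,b_q^*]:$ is a product of two operators both acting on hole modes or both on particle modes.
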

\begin{proof}
For every $m \geq 0$ we define
\begin{align}
\epsilon_m(k_m, \dots, k_1 ; q_m , \dots , q_1) = & \langle e^*_{k_m} \cdots e^*_{k_1} \Omega |  (\Phi^* \Phi-1) e^*_{q_m} \cdots e_{q_1}^* \Omega \rangle   .
\end{align}
The goal is to show
\begin{equation}
    |\epsilon_m(k_m, \dots, k_1 ; q_m , \dots , q_1)| \leq c \binom{m}{2} m! k_F^{1-d}. 
    \label{eq:eps_bound}
\end{equation}
We proceed by induction on $m$. Cases $m=0$ and $m=1$ are trivial. Using the Leibniz rule for commutators and decomposition \eqref{eq:commutator_decomp} we derive
\begin{align}
\label{eq:Phi_ind}    \langle &  e^*_{k_{m+1}}  \cdots e^*_{k_1} \Omega |\Phi^* \Phi e^*_{q_{m+1}} \cdots e^*_{q_1} \Omega \rangle = \langle \varphi_{k_{m+1}} ^* \cdots \varphi_{k_1}^* \psi_0 | \varphi_{q_{m+1}}^* \cdots \varphi_{q_1}^* \psi_0 \rangle \\
    = & \sum_{i=1}^{m+1} \langle \varphi_{k_m}^* \cdots \varphi_{k_1}^* \psi_0 | \varphi_{q_m}^* \cdots  [\varphi_{k_{m+1}}, \varphi_{q_i}^*] \cdots \varphi_{q_1}^* \psi_0  \rangle \nonumber \\
 =&  \sum_{i=1}^{m+1} \Bigg[ \delta_{q_i}^{k_{m+1}} \langle e_{k_m}^* \cdots e^*_{k_1} \Omega | \Phi^* \Phi e_{q_{m+1}}^* \cdots e_{q_{i+1}}^* e_{q_{i-1}}^* \cdots e_{q_1}^* \Omega \rangle \nonumber \\
 & + \langle \varphi_{k_m}^* \cdots \varphi_{k_1}^* \psi_0 | \varphi_{q_{m+1}}^* \cdots :[\varphi_{k_{m+1}}, \varphi_{q_i}^*]: \cdots \varphi_{q_1}^* \psi_0 \rangle \Bigg]. \nonumber
\end{align}
In the first term of the summand we use $\Phi^* \Phi = 1 + (\Phi^* \Phi-1)$ and the identity
\begin{align}
& \sum_{i=1}^{m+1}  \delta_{q_i}^{k_{m+1}}\langle e_{k_m}^* \cdots e^*_{k_1} \Omega |  e_{q_{m+1}}^* \cdots e_{q_{i+1}}^* e_{q_{i-1}}^* \cdots e_{q_1}^* \Omega \rangle \\
&= \langle e_{k_{m+1}}^* \cdots e^*_{k_1} \Omega |  e_{q_{m+1}}^* \cdots  e_{q_1}^* \Omega \rangle . \nonumber
\end{align}
This allows to rewrite \eqref{eq:Phi_ind} in the form
\begin{align}
   & \epsilon_{m+1}(k_{m+1}, \dots, k_1 ; q_{m+1}, \dots, q_1) - \sum_{i=1}^{m+1} \delta^{k_{m+1}}_{q_i} \epsilon_m(k_m , \dots, k_1 ; q_{m+1} , \dots , \cancel{q_{i}}, \dots, q_1) \nonumber \\
    = & \sum_{i=1}^{m+1} \langle \varphi_{k_m}^* \cdots \varphi_{k_1}^* \psi_0 | \varphi_{q_{m+1}}^* \cdots :[\varphi_{k_{m+1}}, \varphi_{q_i}^*]: \cdots \varphi_{q_1}^* \psi_0 \rangle. 
\end{align}
Using Lemmas \ref{lem:Ck_size} and \ref{lem:basic_est} we bound the right hand side by $c \, (m+1)! m k_F^{1-d}$. Using the induction hypothesis, we get \eqref{eq:eps_bound} for $m+1$.

For an $n \times n$ matrix $T$ we have $\| T \| \leq n \max_{i,j} |T_{ij}|$. Since monomials in $e^*_k$ have norms at least $1$ and form an orthogonal basis, this trivial estimate gives
\begin{equation}
    \| (\Phi^{S}_{m})^* \Phi_{m}^S-1 \| \leq c \binom{m+|S|-1}{m} \binom{m}{2} m! k_F^{1-d}.
    \label{eq:Phi_isom_estimate_fixed_deg}
\end{equation}
The orthogonal decomposition $\mathcal D^S_{\leq m} = \bigoplus_{i=0}^m \mathcal D^S_{i}$ is preserved by $(\Phi^{S}_{\leq m})^* \Phi_{\leq m}^S$, so~\eqref{eq:Phi_isom_estimate_fixed_deg} is satisfied also for $(\Phi^{S}_{\leq m})^* \Phi_{\leq m}^S-1$ (because the right hand side of \eqref{eq:Phi_isom_estimate_fixed_deg} is an increasing function of $m$).
\end{proof}

Let us split $H_N= E_N^{(0)} +H^{(1)}+H^{(2)}$, where
\begin{align}
H^{(1)}  =& \frac{N^{-\alpha}}{2} \sum_{k \neq 0} |C_k| \widehat v(k) (\varphi_k^* + \varphi_{-k})(\varphi_{-k}^* + \varphi_k) , \\
H^{(2)}  =&  :T: + \frac{N^{-\alpha}}{2} \sum_{k \neq 0} \widehat v(k) \left( (b_k^* + b_{-k}) d_k + d_{-k} (b_{-k}^* +b_k) + d_{-k}d_k \right). \nonumber
\end{align}
Furthermore, let us define
\begin{align}
H^B &= \frac{N^{-\alpha}}{2} \sum_{k \neq 0} |C_k| \widehat v(k)  (e_k^*+e_{-k})(e_{-k}^* + e_k), \\    
\widetilde H^B & = \sum_{k \neq 0} |k| \widehat v(k)  (e_k^*+e_{-k})(e_{-k}^* + e_k).\nonumber
\end{align}
The operators $H^B, \widetilde H^B$ are self-adjoint on $\mathcal F$ with spectrum $[0, \infty)$ (unless $v=0$ in which case $H^B = \widetilde H^B=0$). The domain $\mathcal D$ is a core of $H^B, \widetilde H^B$.

We will treat $H^{(2)}$ as an error term and $H^B$ as an approximation to $ H^{(1)}$. The operator $\widetilde H^B$ does not depend on $k_F$ and by Lemma \ref{lem:Ck_size} we have
\begin{equation}
    0 \leq H^B \leq c N^{1- \alpha - \frac{1}{d}} \widetilde H^B, \qquad \inf_{\substack{f \in \mathcal D \\ \| f \| =1}} \langle f | \widetilde H^B f \rangle = 0,
    \label{eq:HB_upper}
\end{equation}
for some $c >0$.

\begin{lem} \label{lem:H1_intertwine}
    We have a bound
    \begin{equation}
        \| (\varphi_k \Phi -\Phi e_k)e^*_{q_1} \cdots e^*_{q_m} \Omega \| \leq c k_F^{1-d} \sqrt{(m-1)!} \binom{m}{2},
\label{eq:ani_intertwine_monom}
    \end{equation}
    and therefore
    \begin{equation}
        \| \varphi_k \Phi_{\leq m}^S - \Phi^S_{\leq m} e_k \| \leq c k_F^{1-d} \binom{m+|S|-1}{m}^{\frac12} \sqrt{(m-1)!} \binom{m}{2}. 
        \label{eq:ani_intertwine_cut}
    \end{equation}
    Similarly,
    \begin{equation}
   \| (\varphi_k^* + \varphi_{-k})(\varphi_k + \varphi_{-k}^*) \Phi^{S}_{\leq m} - \Phi^{S \cup \{ k, - k \}}_{\leq m+2} (e_k^* +e_{-k})(e_k + e_{-k}^*) \| \leq c_{m}^{|S|} k_F^{1-d},
    \label{eq:quad_intertwine}
    \end{equation}
   with some explicit constant $c_{m}^{|S|}$ depending only on $m$ and $|S|$. Hence
    \begin{equation}
        \| H^{(1)} \Phi_{\leq m}^S - \Phi H^B \| \leq c_{m}^{|S|} N^{- \alpha} \sum_k |k| |\widehat v(k)|.
        \label{eq:H1_intertwine}
    \end{equation}
\end{lem}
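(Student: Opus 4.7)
The plan is to derive the monomial bound \eqref{eq:ani_intertwine_monom} first, and then bootstrap the remaining three estimates from it by linearity and algebraic manipulation. For \eqref{eq:ani_intertwine_monom} I would imitate the commutator expansion from Lemma \ref{lem:isometric_Phi}: push $\varphi_k$ rightward through the string of $\varphi_{q_i}^*$'s using the commutation relation $[\varphi_k, \varphi_q^*] = \delta_{k,q} + :[\varphi_k, \varphi_q^*]:$ obtained from \eqref{eq:commutator_decomp} after normalization, and annihilate the residual term at the right end using $\varphi_k \psi_0 = 0$. The scalar Kronecker-delta parts reproduce precisely $\Phi e_k e_{q_1}^* \cdots e_{q_m}^* \Omega$, so that the defect becomes
\begin{equation}
(\varphi_k \Phi - \Phi e_k) e_{q_1}^* \cdots e_{q_m}^* \Omega = \sum_{i=1}^m \varphi_{q_1}^* \cdots \varphi_{q_{i-1}}^* :[\varphi_k, \varphi_{q_i}^*]: \varphi_{q_{i+1}}^* \cdots \varphi_{q_m}^* \psi_0.
\end{equation}

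To bound each summand I would combine three observations: (i) $[\mathcal N, \varphi_q^*] = \varphi_q^*$ and $\mathcal N$ commutes with $:[\varphi_k, \varphi_q^*]:$ (both direct consequences of \eqref{eq:exc_number} together with the explicit forms of $b_q^*$ and $:[b_k, b_q^*]:$); (ii) $\|\varphi_q^* \psi\| \leq \|(\mathcal N + 1)^{1/2} \psi\|$, from Lemma \ref{lem:basic_est}(1); and (iii) $\|:[\varphi_k, \varphi_{q_i}^*]: \phi\| \leq c\, k_F^{1-d} \|\mathcal N \phi\|$, from Lemma \ref{lem:basic_est}(2) combined with Lemma \ref{lem:Ck_size}. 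Iterating (ii) starting from $\psi_0$ yields $\|\varphi_{q_{i+1}}^* \cdots \varphi_{q_m}^* \psi_0\| \leq \sqrt{(m-i)!}$; then (iii) and (i) give $c\, k_F^{1-d}(m-i) \sqrt{(m-i)!}$ after the commutator acts (since $\mathcal N$ still equals $m-i$ on the output by (i)); and the remaining $i-1$ creation operators contribute at most $\sqrt{(m-1)!/(m-i)!}$ via (ii). Multiplying and summing $\sum_{i=1}^m (m-i) = \binom{m}{2}$ yields \eqref{eq:ani_intertwine_monom}.

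The cut-off estimate \eqref{eq:ani_intertwine_cut} then follows by expanding a generic element of $\mathcal D^S_{\leq m}$ in the orthogonal monomial basis and applying the crude matrix-norm bound $\|A\| \leq \dim(V) \max_{ij} |A_{ij}|$ as at the end of Lemma \ref{lem:isometric_Phi}. For \eqref{eq:quad_intertwine}, setting $R_k := \varphi_k \Phi - \Phi e_k$ and using the exact intertwining $\varphi_q^* \Phi = \Phi e_q^*$, a direct rearrangement gives
\begin{equation}
(\varphi_k^* + \varphi_{-k})(\varphi_k + \varphi_{-k}^*) \Phi - \Phi (e_k^* + e_{-k})(e_k + e_{-k}^*) = R_{-k}(e_k + e_{-k}^*) + (\varphi_k^* + \varphi_{-k}) R_k,
\end{equation}
whose norm on $\mathcal D^S_{\leq m}$ is controlled by \eqref{eq:ani_intertwine_cut} on the enlarged cutoff $\mathcal D^{S \cup \{k,-k\}}_{\leq m+2}$, together with the standard $\mathcal N$-based bounds showing that $e_k + e_{-k}^*$ and $\varphi_k^* + \varphi_{-k}$ have norms $O(\sqrt{m+1})$ on the appropriate subspaces. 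Finally, \eqref{eq:H1_intertwine} follows by inserting \eqref{eq:quad_intertwine} into the $k$-sum defining $H^{(1)} \Phi - \Phi H^B$ and using Lemma \ref{lem:Ck_size} to convert $|C_k| k_F^{1-d}$ into $c |k|$, producing the factor $\sum_k |k| \widehat v(k)$.

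The main obstacle is the $\mathcal N$-bookkeeping in the proof of \eqref{eq:ani_intertwine_monom}: it is crucial that $:[\varphi_k, \varphi_q^*]:$ preserves the excitation count and is bounded by the first power of $\mathcal N$, so that the combinatorial prefactor accumulates only to $\sqrt{(m-1)!}$ rather than to something exponentially larger. Once this is established, the remaining steps are essentially algebraic rearrangements and repeated applications of Lemmas \ref{lem:Ck_size} and \ref{lem:basic_est}.
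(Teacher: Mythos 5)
Your proposal is correct and follows the same route as the paper: the identity expressing the defect as a sum of $:[\varphi_k,\varphi_{q_i}^*]:$ insertions, term-by-term bounds via Lemmas \ref{lem:Ck_size} and \ref{lem:basic_est}, a crude matrix-norm estimate for \eqref{eq:ani_intertwine_cut}, and algebraic rearrangement for \eqref{eq:quad_intertwine}--\eqref{eq:H1_intertwine}. Your explicit $\mathcal N$-bookkeeping (in particular that $:[\varphi_k,\varphi_q^*]:$ commutes with $\mathcal N$) fills in what the paper compresses into ``bound term by term''; the only minor deviation is your use of $\|A\|\leq\dim(V)\max_{ij}|A_{ij}|$ where the paper uses the sharper column-norm bound $\|A\|\leq\sqrt{n}\,c$, an immaterial difference in constants.
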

\begin{proof}
We have the identity
\begin{equation}
    (\varphi_k \Phi -\Phi e_k)e^*_{q_1} \cdots e^*_{q_m} \Omega = \sum_{i=1}^m \varphi_{q_1}^* \cdots :[\varphi_k, \varphi_{q_i}^*]: \cdots \varphi_{q_m}^* \psi_0.
    \end{equation}
Now bound term by term using Lemmas \ref{lem:Ck_size}, \ref{lem:basic_est}. This proves \eqref{eq:ani_intertwine_monom}. Then \eqref{eq:ani_intertwine_cut} follows because for an $n \times n$ matrix with columns of norm $c$ has norm at most $\sqrt{n}c$. Bound \eqref{eq:quad_intertwine} is derived from \eqref{eq:ani_intertwine_cut} and the identity $\varphi_k^* \Phi_{\leq m}^S = \Phi^{S \cup \{ k \}}_{\leq m+1} e_k^*$ using the triangle inequality. Then \eqref{eq:quad_intertwine} implies \eqref{eq:H1_intertwine}.
\end{proof}

\begin{lem} \label{lem:H2_est}
Fix $K \in [2 \pi,k_F]$ and let $S = \{ k \in (2 \pi \Z)^d \setminus \{ 0 \} \, | \, |k| \leq K \}$. For every unit vector $\psi \in \Ran(\Phi_{\leq m}^S)$ we have
\begin{align}
| \langle \psi| H^{(2)} \psi \rangle| \leq & (2k_F K+K^2)m + \frac{N^{-\alpha}}{2} \sum_{k \neq 0} |\widehat v(k)| (8 m (m+1)^{\frac12} |C_k|^{\frac12} +4m^2).
\nonumber
\end{align}
If $\alpha < 1 - \frac{2}{d}$, the right hand side is $o (N^{1 - \alpha - \frac{1}{d}})$ for $k_F \to \infty$.
\end{lem}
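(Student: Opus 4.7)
The plan is to split $H^{(2)}$ into its kinetic part $:T:$ and the three families of interaction terms $(b^*_k+b_{-k})d_k$, $d_{-k}(b^*_{-k}+b_k)$, and $d_{-k}d_k$, and estimate the expectation of each piece on $\Ran(\Phi^S_{\leq m})$ separately. The interaction pieces will be bounded by Cauchy--Schwarz combined with Lemma \ref{lem:basic_est}; this requires prior control of $\mathcal N$ on $\Ran(\Phi^S_{\leq m})$. The kinetic piece exploits the momentum cutoff $|k|\leq K$ built into $S$.

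For $\mathcal N$, I would use $[\mathcal N,\varphi^*_k]=\varphi^*_k$ on $\mathcal H_N$ (since there $\mathcal N = \sum_{|p|>k_F} a^*_p a_p$) and $\mathcal N\psi_0=0$, so every $\Phi f$ with $f\in\mathcal D^S_j$ is an $\mathcal N$-eigenvector of eigenvalue $j$. The subspaces $\Phi(\mathcal D^S_j)$ are therefore orthogonal for distinct $j$ and span $\Ran(\Phi^S_{\leq m})$, so any unit $\psi$ in the range decomposes as $\psi=\sum_{j=0}^m\psi_j$ with $\mathcal N\psi_j=j\psi_j$, giving $\|\mathcal N\psi\|\leq m$ and $\|\mathcal N^{1/2}\psi\|\leq \sqrt m$. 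For $:T:$ I expand $\psi$ in the orthonormal basis of Slater determinants. Each Slater determinant $\sigma$ with nonzero coefficient appears inside some $\varphi^*_{k_1}\cdots\varphi^*_{k_j}\psi_0$ with $j\leq m$ and all $k_i\in S$, so $\sigma$ has particles $\{p_i\}$ above $k_F$ paired with holes $\{p_i-k_i\}$ below, with $|p_i-k_i|\leq k_F$ and $|k_i|\leq K$. Its $:T:$-eigenvalue equals
\begin{equation*}
    \sum_i\bigl(|p_i|^2-|p_i-k_i|^2\bigr) = \sum_i\bigl(2(p_i-k_i)\cdot k_i + k_i^2\bigr) \leq m(2k_FK+K^2),
\end{equation*}
and this bound passes to $\langle\psi|:T:\psi\rangle\leq m(2k_FK+K^2)$.

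For the interaction terms, Cauchy--Schwarz gives $|\langle\psi|(b^*_k+b_{-k})d_k\psi\rangle|\leq \|(b_k+b^*_{-k})\psi\|\,\|d_k\psi\|$. Using Lemma \ref{lem:basic_est} with the $\mathcal N$-bounds above, $\|(b_k+b^*_{-k})\psi\|\leq 2|C_k|^{1/2}\sqrt{m+1}$ and $\|d_k\psi\|\leq 2m$, so each such term is at most $4m(m+1)^{1/2}|C_k|^{1/2}$. The adjoint term $d_{-k}(b^*_{-k}+b_k)$ produces the same bound, while $|\langle\psi|d_{-k}d_k\psi\rangle|=\|d_k\psi\|^2\leq 4m^2$ since $d_{-k}=d_k^*$. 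Adding these contributions weighted by $\tfrac12 N^{-\alpha}|\widehat v(k)|$ and summed over $k$, together with the kinetic bound, yields the stated inequality. For the $o(N^{1-\alpha-1/d})$ claim, Lemma \ref{lem:Ck_size} gives $|C_k|^{1/2}\leq c\,k_F^{(d-1)/2}|k|^{1/2}\leq c'\,k_F^{(d-1)/2}|k|$ for $|k|\leq k_F$ (since $|k|\geq 2\pi$) and $|C_k|^{1/2}\leq c\,k_F^{d/2}$ otherwise, so $\sum_k|k||\widehat v(k)|<\infty$ forces $N^{-\alpha}\sum_k|\widehat v(k)||C_k|^{1/2}=O(k_F^{(d-1)/2-\alpha d})=o(N^{1-\alpha-1/d})$ when $d\geq 2$; likewise $N^{-\alpha}m^2=o(N^{1-\alpha-1/d})$. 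Finally, $\alpha<1-\tfrac{2}{d}$ gives $d(1-\alpha)-2>0$, so one may choose $K=K(k_F)\in[2\pi,k_F]$ tending to infinity slowly enough that $k_FK+K^2=o(N^{1-\alpha-1/d})$.

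The main obstacle is the kinetic estimate. One has to unpack the definition of $\Phi$ to see that every Slater determinant in $\Ran(\Phi^S_{\leq m})$ admits a particle--hole pairing with every transfer lying in $S$. This is what improves the per-pair kinetic excess from the naive $O(k_F^2)$ to the crucial $O(k_FK+K^2)$, which is precisely what fits into the error budget in the strongly interacting regime $\alpha<1-\tfrac{2}{d}$.
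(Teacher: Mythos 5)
Your proposal is correct and follows the paper's own (very terse) proof: you bound the kinetic part by observing that every Slater determinant occurring in $\Phi f$, $f\in\mathcal D^S_{\leq m}$, is $\psi_0$ with at most $m$ particle–hole pairs each of transfer $\leq K$, and you bound the remaining terms via Cauchy–Schwarz and Lemma~\ref{lem:basic_est} together with the eigenvalue estimate $\|\mathcal N\psi\|\leq m$, $\|(\mathcal N+1)^{1/2}\psi\|\leq\sqrt{m+1}$ coming from $[\mathcal N,\varphi_k^*]=\varphi_k^*$. The only superfluous remark is at the end: since $K$ and $m$ are fixed in the lemma (and in its use in Theorem~\ref{result}), one does not need to let $K$ grow with $k_F$; the term $2k_FKm$ is $O(N^{1/d})$, which is already $o(N^{1-\alpha-1/d})$ precisely because $\alpha<1-\tfrac{2}{d}$.
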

\begin{proof}
If $p \in C_k$ with $|k| \leq K$, then $|p+k|^2 - |p|^2 \leq 2k_FK +K^2$. Since $\psi$ has at most $m$ particles and holes created by $b_k^*$ with such $k$, the kinetic term is at most $(2k_F K+K^2) m$. For the other terms we use Lemmas \ref{lem:Ck_size} and \ref{lem:basic_est}.
\end{proof}

We are now ready to finalize.

\begin{proof}[Proof of Theorem \ref{result}]
Let $f \in \mathcal D$. Then for some $m \in \N$ and $K, S$ as in \ref{lem:H2_est} we have $f \in \mathcal D^S_{\leq m}$. By Lemmas \ref{lem:isometric_Phi}, \ref{lem:H1_intertwine}, \ref{lem:H2_est} and by \eqref{eq:HB_upper} we have
\begin{align}
  E_N \leq   \frac{\langle \Phi f | H_N \Phi f \rangle}{\langle \Phi f | \Phi f \rangle} & = E_N^{(0)} + \langle f | H^B f \rangle + o (N^{1- \alpha - \frac{1}{d}}) \\
   \nonumber & \leq E_N^{(0)} + c N^{1 - \alpha - \frac{1}{d}} \langle f | \widetilde H^B f \rangle + o (N^{1- \alpha - \frac{1}{d}}) . 
\end{align}
This proves that
\begin{equation}
    \limsup_{k_F \to \infty} \frac{E_N - E_N^{(0)}}{N^{1 - \alpha - \frac{1}{d}}} \leq c \langle f | \widetilde H^B f \rangle.
\end{equation}
Now optimize over $f$.
\end{proof}

\section*{Acknowledgments}

We would like to thank Johannes Agerskov, Martin Ravn Christiansen, Jan Derezinski and Robert Seiringer for helpful discussions. 
This work was
partially supported by the Villum Centre of Excellence for the
Mathematics of Quantum Theory (QMATH) with Grant No.10059.
SF was partially funded by the European Union. Views and opinions expressed are however those of the authors only
and do not necessarily reflect those of the European Union or the European Research Council. Neither
the European Union nor the granting authority can be held responsible for them.

\end{document}